\newcommand{\N}{{\mathbb N}}
\newtheorem{prop}{Proposition}[section]
\newtheorem{lem}{Lemma}[section]
\theoremstyle{definition}
\newtheorem{ex}{Example}[section]
\newtheorem{rem}{Remark}[section]
\newcommand{\bi}{\begin{itemize}}
\newcommand{\ei}{\end{itemize}}
\newcommand{\bd}{\begin{description}}
\newcommand{\ed}{\end{description}}
\newcommand{\be}{\begin{enumerate}}
\newcommand{\ee}{\end{enumerate}}
\def\bc{\begin{center}}
\def\ec{\end{center}}
\def\no{\noindent}
\def\l{\left}
\def\r{\right}
\def\b{\big}
\def\m{\medskip}
\def\s{\smallskip} 
\begin{document}

\title{\bf Worst--Case Analysis\\
of Weber's GCD Algorithm}
\author{Christian Lavault\thanks{E-mail: \texttt{Christian.Lavault@lipn.univ-paris13.fr}}\ \ and\ 
S.~Mohamed Sedjelmaci\\[.5\baselineskip]
LIPN, CNRS UPRES-A 7030\\
Universit\'e Paris 13, F-93430 Villetaneuse}
\date{\empty}
\maketitle

\begin{abstract}
Recently, Ken Weber introduced an algorithm for finding the $(a,b)$-pairs satisfying $au+bv\equiv 0\pmod{k}$, with 
$0<|a|,|b|<\sqrt{k}$, where $(u,k)$ and $(v,k)$ are coprime. It is based on Sorenson's and Jebelean's ``$k$-ary reduction'' 
algorithms. We provide a formula for $N(k)$, the maximal number of iterations in the loop of Weber's GCD algorithm.

\s \no {\bf Keywords:} Integer greatest common divisor (GCD); Complexity analysis; Number theory.
\end{abstract}

\bibliographystyle{article}
\def\bibfmta#1#2#3#4{{#1}, {#2}, {\em #3}, #4.}
\bibliographystyle{book}
\def\bibfmtb#1#2#3#4{{#1}, {\em #2}, {#3}, #4.}

\section{Introduction}
The greatest common divisor (GCD) of integers $a$ and $b$, denoted by $\gcd(a,b)$, is the largest integer 
that divides both $a$ and $b$. 

Recently, Sorenson proposed the ``right-shift $k$-ary algorithm''~\cite{sor}. It is based on 
the following reduction. Given two positive integers $u>v$ relatively prime to $k$ (i.e., $(u,k)$ and $(v,k)$ are coprime), two integers $a,\;b$ can be found that satisfy
\begin{equation} \label{eq:ab}
au + bv\; \equiv \kern-.3cm \pmod{k}\ \quad \mbox{with} \quad 0 < |a|,\; |b| < \sqrt{k}.
\end{equation}
If we perform the transformation $(u,v)\longmapsto (u',v')$ (also called ``$k$-ary reduction''), where 
$(u',v') = \big(|au + bv|/k,\min(u,v)\big)$, which replaces $u$ with $u'=|au + bv|/k$, the size of $u$ 
is reduced by roughly $1/2\,\log_2(k)$ bits. Sorensen suggests table lookup to find sufficiently small $a$ and $b$ satisfying~(\ref{eq:ab}). By contrast, Jebelean~\cite{jeb1,jeb2} and Weber~\cite{web} both propose an 
easy algorithm, which finds such small $a$ and $b$ that satisfy~(\ref{eq:ab}) with time complexity $O(n^2)$, where $n$ represents the number of bits in the two inputs. 
This latter algorithm we call the ``Jebelean-Weber algorithm'', or {\em JWA} for short. 

\smallskip The present work focuses on the study of $N(k)$, the maximal number of  
iterations of the loop in {\em JWA}, in terms of $t = t(k,c)$ as a function 
of two coprime positive integers $c$ and $k$ $(0<c<k)$. Notice that this 
exact worst-case analysis of the loop does not provide the greatest lower 
bound on the complexity of {\em JWA}: it does not result in the optimality 
of the algorithm.

In the next Section~2, an upper bound on $N(k)$ is given, in Section~3, 
we show how to find explicit values of $N(k)$ for every  integer $k>0$. 
Section~4 is devoted to the determination of all integers $c>0$, 
which achieve the maximal value of $t(k,c)$ for every given $k>0$; 
that is the worst-case ocurrences of {\em JWA}. Section~5 contains 
concluding remarks.

\section{An Upper Bound on $N(k)$}
Let us recall the {\em JWA} as stated in~\cite{sel1,web}. The first instruction $c$ := $x/y\bmod k$ in {\em JWA} 
is not standard. It means that the algorithm finds $c\in [1,k-1]$, 
such that $cy=x+nk$, for some $n$ (where $x,y,k,c$, and $n$ are all integers).

\begin{algorithm}
\begin{tabbing}
\quad \=\quad  \kill
\> \\
Input: $x,y>0$, $k>1$, and \\
$\gcd(k,x)= \gcd(k,y)=1$. \\
Output: $(n,d)$ such that \\
$0<n,\; |d|<\sqrt{k}$, and
$ny\equiv dx\pmod{k}$. \\
\> $c := x/y\bmod k$ ; \\
\> $f_1=(n',d') := (k,0)$ ; \\
\> $f_2=(n'',d'') := (c,1)$ ; \\
{\bf while}\ $n''\ge \sqrt{k}$\ {\bf do} \\
\> $f_1 := f_1 - \lfloor n'/n'' \rfloor\,f_2$ ; \\
\> {\bf swap} $(f_1,f_2)$ \\
{\bf endwhile} \\
{\bf return} $f_2$
\end{tabbing}
\end{algorithm}
\no Notice that the loop invariant is $n'\, |d''| \;+\; n''\, |d'| \;=\; k$.  When $(n,d)$ is the output result of {\em JWA}, 
the pairs $(a,b) = (d,-n)$ and $(-d,n)$ meet property~(\ref{eq:ab}).

\subsection{Notation}
In {\em JWA}, the input data are the positive integers $k$, $u$ and $v$. However, for the purpose of the worst-case complexity analysis, we consider $c=u/v\bmod k$ in place of the pair $(u,v)$. 
Therefore, the actual input data of {\em JWA} are regarded as being $k$ and $c$, such that $0<c<k$, and $\gcd(k,c)=1$.

Throughout, we use the following notation. The sequence $(n_i,d_i)$ denotes the successive pairs produced 
by {\em JWA} when $k$ and $c$ are the input data. Let $t=t(k,c)$ denote the number of iterations 
of the loop of {\em JWA}; $t$ must satisfy the following inequalities:
\begin{equation} \label{eq:t}
n_t < \sqrt{k} < n_{t-1}\ \qquad \mbox{and}\ \qquad 0 < n_t,\, |d_t| < \sqrt{k}, 
\end{equation}
where finite sequence $D=(d_i)$~is defined recursively for  $i=-1,\;0\;,\;1,\ldots,\;(t-2)$ as
\begin{align} \label{eq:d}
d_{i+2} & =\; d_i \;-\; q_{i+2}\,d_i \quad \mbox{with}\ \quad d_{-1} = 0\ \ \mbox{and}\ \ d_0 = 1\nonumber\\
q_{i+2} & =\; \lfloor n_i/n_{i+1}\rfloor \qquad \mbox{with}\ \quad n_{-1} = k\ \ \mbox{and}\ \ n_0 = c. 
\end{align}
We denote by $Q=(q_i)$ the finite sequence of partial quotients defined 
in~(\ref{eq:d}). The sequence $D$ is uniquely determined from the choice 
of $Q$ (i.e., $D=D(Q)$), since the initial data $d_{-1}$ and $d_0$ 
are fixed and $D$ is an increasing function of the $q_i$'s in $Q$. 
Let $(F_n)$ $(n=0,1,\ldots)$ be the Fibonacci sequence, we define 
$m(k)$ by
$$m(k) =\; \max\l\{i\ge 0 \ |\ F_{i+1}\le \sqrt{k}\r\}\ \quad  \mbox{with}\ \ i\in \N.$$
For every given integer $k>0$, the maximal number of iterations of the loop of {\em JWA} is:
$$N(k) =\; \max\b\{t(k,c) \ |\  0 < c < k\ \ \mbox{and}\ \ \gcd(k,c) = 1\b\}.$$

\subsection{Bounding $N(k)$}

\begin{lem} \label{lem1}
With the above notation,
\begin{enumerate}
\item[(i)] $|d_t|\ge F_{t+1}$.

\item[(ii)] $N(k)\le m(k)$.
\end{enumerate}
\end{lem}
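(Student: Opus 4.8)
The plan is to prove (i) by strong induction on $t$, exploiting the fact that the recursion for $D$ in~(\ref{eq:d}) mirrors the Fibonacci recursion once we account for the partial quotients. First I would observe that every partial quotient $q_i$ occurring in the loop satisfies $q_i\ge 1$; indeed the loop runs only while $n''\ge\sqrt k$, and since $n_{-1}=k>n_0=c$ and the $n_i$ are strictly decreasing (standard Euclidean behaviour), each floor $\lfloor n_i/n_{i+1}\rfloor$ is at least $1$. Moreover, the signs of the $d_i$ alternate, so that $|d_{i+2}| = |d_i| + q_{i+2}\,|d_{i+1}| \ge |d_i| + |d_{i+1}|$. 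With the base cases $|d_{-1}| = 0 = F_0$ wait --- here one must be slightly careful with the index bookkeeping: $|d_0| = 1 = F_1 \le F_2$ and $|d_1| = q_1 \ge 1 = F_2$ need to be checked against the claimed bound $|d_i|\ge F_{i+1}$. Granting the base cases, the inductive step is immediate: $|d_{i+2}| \ge |d_i| + |d_{i+1}| \ge F_{i+1} + F_{i+2} = F_{i+3}$, which is exactly the claim at index $i+2$. Evaluating at $i = t$ gives $|d_t|\ge F_{t+1}$.

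For (ii), I would combine (i) with the defining inequality for $t$ in~(\ref{eq:t}), namely $|d_t| < \sqrt k$. Chaining these, $F_{t+1} \le |d_t| < \sqrt k$, so $F_{t+1} \le \sqrt k$ (the strict inequality can be relaxed to non-strict since $F_{t+1}$ is an integer and $\sqrt k$ need not be, but in any case $F_{t+1}\le\sqrt k$ holds). By the definition of $m(k)$ as the largest $i$ with $F_{i+1}\le\sqrt k$, this forces $t\le m(k)$. Since this holds for every admissible $c$, taking the maximum over $c$ yields $N(k) = \max_c t(k,c) \le m(k)$.

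The main obstacle I anticipate is not conceptual but bookkeeping: the recursion~(\ref{eq:d}) is stated with a mild typo (it reads $d_{i+2} = d_i - q_{i+2} d_i$, where the second $d_i$ should surely be $d_{i+1}$), and the index ranges ($i$ from $-1$ to $t-2$, initial data at $i=-1$ and $i=0$) must be lined up correctly with the Fibonacci indexing $F_0 = 0$, $F_1 = 1$, $F_2 = 1$. I would pin down the alternation-of-signs claim carefully --- it is what converts the subtraction in the recursion into an addition of absolute values --- and verify the first two or three terms explicitly before invoking the induction, since an off-by-one there would propagate. A secondary point worth stating cleanly is why all $q_i\ge 1$ inside the loop; this rests on $n_{i+1} \le n_i$ strictly for the indices in question, which follows from the Euclidean-type update $f_1 := f_1 - \lfloor n'/n''\rfloor f_2$ together with the swap, exactly as in the classical continued-fraction/Fibonacci worst-case argument for the Euclidean algorithm.
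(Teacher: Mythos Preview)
Your proposal is correct and follows essentially the same route as the paper: an induction on the index using $|d_i| = |d_{i-2}| + q_i\,|d_{i-1}| \ge |d_{i-2}| + |d_{i-1}|$ with the Fibonacci base cases for~(i), and then chaining $F_{t+1}\le |d_t| < \sqrt{k}$ with the definition of $m(k)$ for~(ii). Your additional remarks on why $q_i\ge 1$, on the sign alternation, and on the typo in~(\ref{eq:d}) are all accurate and in fact make explicit steps that the paper's proof leaves implicit.
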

\begin{proof} \hfill

\no \emph{(i)} The proof is by induction on $t$.
\begin{itemize}
\item {\em Basis:} $|d_{-1}| = 0 = F_0$, $|d_0| = 1 = F_1$, and $|d_1| = q_1\ge 1 = F_2$.
\item {\em Induction step:} For every $i\ge 0$, suppose $|d_j|\ge F_{j+1}$ for $j = -1$, $0$, $1, \ldots$, $(i-1)$. 
Then,
\end{itemize}
$$|d_i| \;=\; |d_{i-2}| \;+\; q_i |d_{i-1}|\; \ge \;|d_{i-2}| \;+\; |d_{i-1}|\; \ge \;F_{i-1} + F_i \;=\; F_{i+1}$$
and \emph{(i)} holds.

\m \noindent \emph{(ii)} $F_{t+1}\le |d_t| < \sqrt{k}$. Hence $t = t(c,k)\;\le \;m(k)$, and also $N(k)\le m(k)$.
\end{proof}
Note that the following inequalities also hold
$$\phi^{m-1} \;<\; F_{m+1} \;le \;\sqrt{k} \;<\; F_{m+2} \;<\; \phi^{m+1},$$
where $\phi = (1+\sqrt{5})/2$ is the golden ratio. 

From Lemma~\ref{lem1} and the above inequalities, an explicit expression of $m(k)$ is easily derived,
$$m(k) = \lfloor \log_\phi(\sqrt{k})\rfloor\ \quad \mbox{or}\ \quad m(k) = \lceil \log_\phi(\sqrt{k})\rceil.$$

\begin{ex}

\begin{description}
\item For $k = 2^{10}$, $m(k) = 7$ and $t(k,633) = N(k)=m(k)=7$.
\item For $k=2^{16}$, $m(k)=12$ and $t(k,40,503)=N(k)=m(k)=12$.
\end{description}
\end{ex}
In the above examples, $N(k) = m(k)$. However, $N(k) < m(k)$ for some specific values of $k$; e.g. $k = 2^{12}$. 
(See Subsection~\ref{appli}, Case 1.)

\section{Worst-Case Analysis of \emph{JWA}}
In this section, we show how to find the largest number of iterations 
$N(k)$ for every integer $k>0$, and we exhibit all the values of $c$ corresponding to the worst case of {\em JWA}.

For $p\le m =m(k)$ and $c>0$ integer, let $I_p(k)$ and $J_p(k)$ be two sets defined as follows,
\begin{eqnarray*}
I_p(k) & = & \l\{c \;|\; (F_p/F_{p+1})k < c < (F_{p+1}/F_{p+2})k\r\}\ \ \text{for}\ \ p\ \text{even},\\
I_p(k) & = & \l\{c \;|\; (F_{p+1}/F_{p+2})k < c < (F_p/F_{p+1})k\r\}\ \ \text{for}\ \ p\ \text{odd}
\end{eqnarray*}
and
$$J_p(k) \;=\; I_p(k)\; \cap ;\{c\ |\ \gcd(k,c)=1\}.$$

\begin{prop} \label{prop1}
Let $k>9$ (i.e. $m(k)\ge 3$), and let c and n be two positive 
integers such that $\gcd(k,c)=1$ and $s\le m(k)=m$. The four following properties hold
\begin{enumerate}
\item[(i)] $c\in I_n(k)\ \Longrightarrow \; k/c = [1,1,\ldots,1,x]$, where $[1,1,\ldots,1,x]$ 
denotes a continued fraction having at least n times a ``$\/1$'' (including the leftmost $1$), 
and x is a sequence of positive integers (see e.g.~\cite{haw}).
\item[(ii)] If $J_{m-1}(k)\ne \emptyset$, then $N(k)=m$ or $m-1$.
\item[(iii)] If $J_{m-2}(k)\ne \emptyset$, then $N(k)=m$, $(m-1)$ or $(m-2)$.
\item[(iv)] If $k=2^s$, $N(k)=m$, $(m-1)$ or $(m-2)$.
\end{enumerate}
\end{prop}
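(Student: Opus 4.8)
The plan is to connect the loop of \emph{JWA} with the continued fraction expansion of $k/c$, and then to exploit the fact that the slowest reduction corresponds to partial quotients equal to $1$, i.e. to Fibonacci-like behaviour. First I would establish (i): the recurrence~(\ref{eq:d}) for $(n_i)$ together with $q_{i+2} = \lfloor n_i/n_{i+1}\rfloor$ is exactly the Euclidean algorithm on $(k,c)$, so the $q_i$'s are the partial quotients of $k/c = [q_1,q_2,\ldots]$. The condition $c\in I_n(k)$ pins $c/k$ (or $k/c$) between two consecutive ratios of Fibonacci numbers $F_p/F_{p+1}$; since $F_p/F_{p+1} = [0,1,1,\ldots,1]$ ($p$ ones) and the map from a real number to its continued fraction is monotone on each ``cylinder'', lying strictly between $F_p/F_{p+1}$ and $F_{p+1}/F_{p+2}$ forces the first $n$ partial quotients of $k/c$ to all equal $1$. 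I would make this precise by the standard mediant/unimodular-matrix argument: the interval with endpoints $F_p/F_{p+1}$ and $F_{p+1}/F_{p+2}$ is the image of $[0,1]$ (or of $(1,\infty)$ on the tail) under the product of $n$ copies of $\left(\begin{smallmatrix}0&1\\1&1\end{smallmatrix}\right)$, so membership is equivalent to a prefix of $n$ ones, with the leftover $x$ being whatever the tail produces. The parity split in the definition of $I_p(k)$ is precisely there to keep the nested intervals correctly oriented (convergents alternate above and below), so I would just check the two parities separately.

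Next, for (ii) and (iii), I would run the loop forward using (i). If $J_{m-1}(k)\neq\emptyset$, pick $c$ in it; then $k/c$ starts with at least $m-1$ ones, so by an easy induction the pairs produced satisfy $n_i \ge F_{m-i}$-type bounds matching the Fibonacci worst case, forcing $t(k,c)$ to be as large as the upper bound $m(k)$ permits, i.e. $t(k,c)\in\{m-1,m\}$; combined with Lemma~\ref{lem1}(ii), $N(k)=t(k,c)$ is $m$ or $m-1$. Here I should be a little careful: having $m-1$ leading ones guarantees the loop survives through a certain number of iterations (each ``$1$'' quotient shrinks $n_i$ only in the Fibonacci-slow way, so $n_{m-2}$ is still of order $F_2\sqrt{k}/F_{m}$, which may or may not still be $\ge\sqrt{k}$ depending on the $m$-th quotient and on $k$ itself), which is exactly why the conclusion is a two-value (resp.\ three-value) window rather than an equality. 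For (iii) the same argument with $m-2$ leading ones gives one more unit of slack, hence the three possibilities $m,m-1,m-2$.

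For (iv) I would show that when $k=2^s$ the set $J_{m-2}(k)$ is nonempty, so that (iii) applies. This is the one genuinely number-theoretic point: $I_{m-2}(k)$ is an open interval whose length is $k$ times $\big|F_{m-1}/F_m - F_{m-2}/F_{m-1}\big| = k/(F_{m-1}F_m)$, which by $F_{m-1}F_m < \phi^{2m-1} \lesssim k$ (using $F_{m+1}\le\sqrt k$) is bounded below by a constant $> 1$ for $k>9$; so $I_{m-2}(k)$ contains at least one integer, in fact enough integers that — since $k=2^s$ makes the coprimality condition just ``$c$ odd'' — it contains an odd one, giving $J_{m-2}(k)\neq\emptyset$. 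I expect the main obstacle to be precisely these interval-length estimates: one must verify carefully that the relevant Fibonacci-ratio gaps, scaled by $k$, exceed $1$ (resp.\ $2$, to guarantee an integer of the right parity) in the stated range $k>9$, and handle the parity bookkeeping of even versus odd $p$ in the definition of $I_p(k)$ without sign errors. The continued-fraction monotonicity in (i) is conceptually the heart of the argument but is standard; the quantitative nonemptiness claims for the $J_p$'s are where the real work lies.
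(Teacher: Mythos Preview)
Your overall architecture matches the paper's: prove (i) first, then use it to feed a ``many leading 1's'' input into the loop for (ii)--(iii), and for (iv) show $J_{m-2}(k)\neq\emptyset$ via an interval-length estimate so that (iii) applies. Your treatment of (i) differs from the paper's --- you use the cylinder/unimodular-matrix description of continued-fraction intervals, whereas the paper shows directly that $|k/c - F_{n+1}/F_n| < 1/F_n^2$ (so $F_{n+1}/F_n$ is a convergent of $k/c$) --- but both are valid and yours is arguably the more standard route. Your (iv) is exactly the paper's argument: $|I_{m-2}(k)| = k/(F_{m-1}F_m) > 2$, hence an odd integer lies in $I_{m-2}(k)$, hence $J_{m-2}(k)\neq\emptyset$.

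The genuine gap is in (ii)--(iii). You assert ``by an easy induction the pairs produced satisfy $n_i\ge F_{m-i}$-type bounds'' and that ``$n_{m-2}$ is still of order $F_2\sqrt{k}/F_m$''. Neither statement gives what is needed, namely $n_{m-2}>\sqrt{k}$: the quantity $F_2\sqrt{k}/F_m=\sqrt{k}/F_m$ is \emph{smaller} than $\sqrt{k}$, and a bare lower bound $n_i\ge F_{m-i}$ has no reason to hold (the $n_i$'s depend on $k$ and $c$, not just on $i$ and $m$). The missing ingredient is the loop invariant
\[
n_{m-2}\,|d_{m-1}| \;+\; n_{m-1}\,|d_{m-2}| \;=\; k,
\]
which the paper invokes explicitly (its equation~(\ref{eq:eea})). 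Once you know from (i) that $q_1=\cdots=q_{m-1}=1$, you get $|d_{m-1}|=F_m$ and $|d_{m-2}|=F_{m-1}$ exactly, so $k = n_{m-2}F_m + n_{m-1}F_{m-1} < n_{m-2}(F_m+F_{m-1}) = n_{m-2}F_{m+1}$, and since $F_{m+1}\le\sqrt{k}$ by the definition of $m(k)$ this forces $n_{m-2}>\sqrt{k}$. That is the step you should replace your heuristic with; after it, the case split on whether $n_{m-1}\lessgtr\sqrt{k}$ gives $t(k,c)\in\{m-1,m\}$ and hence $N(k)\in\{m-1,m\}$, exactly as in the paper. The same mechanism, shifted by one index, handles (iii).
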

\begin{proof}\hfill

\no \emph{(i)}~Let $a_n/b_n=[1,1,\ldots,1] = F_{n+1}/F_n$ be the $n$-th convergent of the golden ratio $\phi$, 
containing $n$ times the value ``1'' (see~\cite{haw,knu} for more details). To prove \emph{(i)}, 
we show that $F_{n+1}/F_n$ is the $n$-th convergent of the rational number $k/c$; in other words,
\begin{equation}
|(k/c) - (F_{n+1}/F_n)| \;<\; 1/(F_n)^2.
\end{equation}
Now, $(F_{n+1})^2 - F_nF_{n+2} = (-1)^n$ and, since $c\in I_n(k)$,
$$|(k/c) - (F_{n+1}/F_n)| \,<\, |(F_{n+1})^2 - F_n F_{n+2}|/(F_nF_{n+1}) \;=\; 1/(F_nF_{n+1}) \,<\, 1/(F_n)^2.$$

\no \emph{(ii)}~First, recall an invariant loop property which is also an Extended Euclidean Algorithm property: 
for $i = 1,\ldots,(t-1)$, where $t = t(k,c)$, we have that
\begin{equation} \label{eq:eea}
n_i\, |d_{i+1}| \;+\; n_{i+1}\, |d_i| \;=\; k. 
\end{equation}

We first prove that $n_{m-2} > \sqrt{k}$. In fact, if we assume that $J_{m-1}(k)\ne \emptyset$, then from {\em (i)}, 
there exists an integer $c$ such that $k/c = [1,1,\ldots,1,x]$ with $(m-1)$ such 1's. 
Then, $q_i = 1$ and $|d_i| = F_{i+1}$ for $i = 1,\ldots,(m-1)$.

\no Now if $n_{m-2} < \sqrt{k}$, then, since $n_{m-1} < n_{m-2}$,
\begin{eqnarray*}
k & = & n_{m-2}\, |d_{m-1}| \;+\; n_{m-1}\, |d_{m-2}| \;=\; n_{m-2}\, F_m \;+\; n_{m-1}\, F_{m-1}\\
& < & \sqrt{k}\, (F_m + F_{m-1}) = \sqrt{k}\, F_{m+1},
\end{eqnarray*}
and hence, $\sqrt{k} < F_{m+1}$, which contradicts the definition of $m(k)$, and $n_{m-2} > \sqrt{k}$.

\no If $n_{m-1} < \sqrt{k}$, then $t(k,c) = m-1$ and $N(k)\ge m-1$; else, if $n_{m-1} > \sqrt{k}$, then $N(k) = m$. 

\m \no \emph{(iii)}~The proof is similar to the previous one. There exists an integer $c$ such that 
$q_i = 1$ and $|di| = F_{i+1}$ for $i = 1,\ldots,(m-2)$. So, $n_{m-3} > \sqrt{k}$, and the result follows.

\m \no \emph{(iv)}~Let $\Delta_{m-2}$ be the size of the interval $I_{m-2}$. Then,
\begin{eqnarray*}
\Delta_{m-2} & = & |(F_{m-2}/F_{m-1})\, k \;-\; (F_{m-1}/F_m)\, k|\\
& = & k\, |F_{m-2}F_m - (F_{m-1})^2|/(F_{m-1}F_m) \;=\; k/(F_{m-1}F_m).
\end{eqnarray*}
Since 
$$2F_{m-1}F_m \,<\, (F_{m-1} + F_m)^2 = (F_{m+1})^2\ \ \text{and}\ \ (F_{m+1})^2\le k,\ \ \text{then}\ \ \Delta_{m-2} > 2.$$ 
Thus, out of two consecutive values within $I_{m-2}(k)$, at least one integer is odd. Therefore, $J_{m-2}(k)\ne \emptyset$ and we can apply \emph{(iii)}. (Note that this argument is not valid when $k$ is not a power of 2.)
\end{proof}

\begin{rem}\hfill

\begin{enumerate}
\item If $J_m(k)\ne \emptyset$, then $N(k)\ge m-1$, since $J_m(k)\subset J_{m-1}(k)\subset J_{m-2}(k)$.

\item The relation $N(k)=m-2$ holds for several $k$'s (e.g. for $k=90$).

\item For any given integer $k$, there may exists a positive integer $c$ such that $c\notin J_m(k)$, 
whereas $t(k,c)=m$. Such is the case when $k=15,849$: $m=10$, $I_m(k) = \{9,795\}$ 
and, since $\gcd(k,9,795)\ge 3$, $J_m(k)=\emptyset$. However, for $c = 11,468$, $t(k,11,468) = 10$.
\end{enumerate}
\end{rem}
This last example shows that $J_m(k)$ is not made of all integers $c$ such that $t(k,c) = m$, with $\gcd(k,c)=1$. 
Proposition~\ref{prop2} shows how to find all such numbers. For the purpose, two technical lemmas are needed first.

\begin{lem} \label{impl}
For every $m\ge 3$, the following three implications hold.
\begin{enumerate}
\item[(i)] $\exists i\ |\ q_i=2\ \Longrightarrow \ F_{m+1} + F_{m-1}\le |d_m|$.

\item[(ii)] $\exists i\ |\ q_i\ge 3\ \Longrightarrow \ |d_m|\ge F_{m+2} > \sqrt{k}$. 

\item[(iii)] $\exists i,j$ $(i\ne j)\ |\ q_i = q_j = 2\ \Longrightarrow \ |d_m|\ge F_{m+2} + 2F_{m-3} > \sqrt{k}$.
\end{enumerate}
\end{lem}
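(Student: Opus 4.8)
The plan is to reduce each of the three implications to a single \emph{extremal} quotient sequence and then read off $|d_m|$ in closed form. Recall from the proof of Lemma~\ref{lem1}\,(i) that the absolute values satisfy $|d_i| = |d_{i-2}| + q_i\,|d_{i-1}|$ with $|d_{-1}| = 0$ and $|d_0| = 1$, and that $D = D(Q)$ is coordinatewise increasing in the partial quotients. Hence, to bound $|d_m|$ from below under a hypothesis of the form ``some $q_i = 2$'', ``some $q_i \ge 3$'', or ``two distinct quotients equal $2$'', it suffices to treat the sequence in which the prescribed quotients take their smallest admissible value and \emph{every} other quotient equals $1$; any other admissible $Q$ only increases $|d_m|$. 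I would also isolate an \emph{increment rule}: raising $q_j$ from $1$ to $2$ while all later quotients stay equal to $1$ increases $|d_m|$ by exactly $F_{m-j+1}\,|d_{j-1}|$, where $|d_{j-1}|$ is evaluated before the change --- indeed, the differences $\varepsilon_\ell = |d_\ell^{\,\mathrm{new}}| - |d_\ell^{\,\mathrm{old}}|$ satisfy the Fibonacci recursion for $\ell > j$, with $\varepsilon_{j-1} = 0$ and $\varepsilon_j = |d_{j-1}|$. The only further tools needed are the addition formula $F_{a+b} = F_aF_{b+1} + F_{a-1}F_b$ and its consequence $F_{m-i+1}F_i = F_{m-1} + F_{i-2}F_{m-1-i}$ (valid for all $i$ once one sets $F_{-1} = 1$).

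For \emph{(i)}, the increment rule applied to the extremal sequence ($q_i = 2$, all other quotients $1$) gives $|d_m| = F_{m+1} + F_{m-i+1}F_i$. Now $F_{m-i+1}F_i = F_{m-1} + F_{i-2}F_{m-1-i} \ge F_{m-1}$ for $2 \le i \le m-1$, while $i = 1$ and $i = m$ give the larger value $F_m$; so the minimum over admissible positions is $F_{m-1}$, and $|d_m| \ge F_{m+1} + F_{m-1}$. For \emph{(ii)}, raising one $q_i$ from $1$ to $3$ is two successive unit increments, each contributing $F_{m-i+1}F_i$ (since $|d_{i-1}| = F_i$ is unchanged in between), so $|d_m| = F_{m+1} + 2F_{m-i+1}F_i \ge F_{m+1} + 2F_{m-1}$. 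As $2F_{m-1} \ge F_{m-1} + F_{m-2} = F_m$, this yields $|d_m| \ge F_{m+1} + F_m = F_{m+2}$, and $F_{m+2} > \sqrt k$ by the definition of $m = m(k)$ (the chain $F_{m+1} \le \sqrt k < F_{m+2}$ recorded after Lemma~\ref{lem1}). Part (ii) follows.

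For \emph{(iii)}, take $q_i = q_j = 2$ with $i < j$ and all other quotients $1$, and apply the increment rule twice --- first the bump at $i$, then the bump at $j$, which requires the value of $|d_{j-1}|$ after the first bump, namely $F_j + F_{j-i}F_i$. This yields
\[
|d_m| = F_{m+1} + F_{m-i+1}F_i + F_{m-j+1}\bigl(F_j + F_{j-i}F_i\bigr).
\]
Substituting $F_{m-i+1}F_i = F_{m-1} + F_{i-2}F_{m-1-i}$, the same with $i$ replaced by $j$, and $F_{m+1} + 2F_{m-1} = F_{m+2} + F_{m-3}$, this becomes
\[
|d_m| = F_{m+2} + F_{m-3} + F_{i-2}F_{m-1-i} + F_{j-2}F_{m-1-j} + F_{m-j+1}F_{j-i}F_i.
\]
It then remains to show that the sum of the last three (nonnegative) terms is at least $F_{m-3}$, with equality (for $m \ge 4$) exactly at $(i,j) = (2, m-1)$, where the cross term equals $F_2F_{m-3}F_2 = F_{m-3}$ and the other two vanish; for $m = 3$ the claim is immediate since then $F_{m-3} = F_0 = 0$. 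Establishing this minimisation over the pairs $1 \le i < j \le m$ is the technical heart of the lemma and, I expect, the main obstacle: I would split into cases by the size of $j$ (when the factor $F_{m-j+1}$ in the cross term is small, the term $F_{j-2}F_{m-1-j}$ or $F_{i-2}F_{m-1-i}$ takes over), each case reducing to an elementary Fibonacci inequality. Once the bracket is shown to be $\ge F_{m-3}$ we obtain $|d_m| \ge F_{m+2} + 2F_{m-3}$, whence also $|d_m| > \sqrt k$ since $F_{m+2} > \sqrt k$.
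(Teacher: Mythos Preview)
Your approach is correct and genuinely different from the paper's. The paper fixes, for each part, a candidate extremal quotient sequence---$Q=(1,2,1,\ldots,1)$ for (i), $Q=(1,3,1,\ldots,1)$ for (ii), $Q=(1,2,1,\ldots,1,2,1)$ for (iii)---computes its $\delta_m$ directly, and then checks case by case that moving the exceptional quotient to another position can only increase $|d_m|$. You instead derive a general \emph{increment rule} (bumping $q_j$ by one adds $F_{m-j+1}\,|d_{j-1}|$ to $|d_m|$) and minimise the resulting Fibonacci product via the identity $F_{m-i+1}F_i = F_{m-1}+F_{i-2}F_{m-1-i}$. This buys a uniform treatment of all positions at once and a sharper intermediate bound in (ii): your $|d_m|\ge F_{m+1}+2F_{m-1}$ is the exact minimum over the position of the ``$3$'', whereas the paper's stated $\delta_m=F_{m+2}$ is actually a slip (for $Q=(1,3,1,\ldots,1)$ one has $\delta_m = F_{m+1}+2F_{m-1}=F_{m+2}+F_{m-3}>F_{m+2}$ once $m\ge 4$, though of course the lemma's inequality survives). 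Both routes identify the same extremal sequences, and your acknowledged gap in (iii)---the two-variable minimisation of $F_{i-2}F_{m-1-i}+F_{j-2}F_{m-1-j}+F_{m-j+1}F_{j-i}F_i$ over $1\le i<j\le m$---is no worse than the paper's own treatment, which simply writes ``the proof is similar to the previous one'' and asserts $\delta_m=F_{m+2}+2F_{m-3}$ without carrying out the comparison.
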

\begin{proof}\hfill

\no \emph{(i)}~Let $\Delta = \Delta(Q) = (\delta_i)_i$ be the sequence defined as: $\delta_{-1} = 0$, $\delta_0 = 1$, 
and $\delta_i = \delta_{i-2} + q_i \delta_{i-1}$, for $i = 1,\;2,\ldots,\;m$ with $Q = (1,2,1,\ldots,1)$.
An easy calculation yields $\delta_i = F_{i+1} + F_{i-1}$ for $i = 1,\;2,\ldots,\;m$.

\no On the other hand, let $(d_i)_i$ be a sequence satisfying~(\ref{eq:d}). 
We show that $|d_m|\ge \delta_m = F_{m+1} + F_{m-1}$ $(m\ge 3)$, and $\Delta$ is thus leading to the smallest possible 
$|d_m|$ satisfying the assumption in \emph{(i)}, i.e. $|d_m| = F_{m+1} + F_{m-1}$ $(m\ge 3)$. More precisely,
\begin{description}
\item If $D=D(Q)$ with $Q=(2,1,1,\ldots,1)$, then $|d_2| = 3$, $|d_3| = 5$, and $|d_m| = F_{m+2}$, 
whereas $\delta_2 = 3$, $\delta_3=4$ and $\delta_m = F_{m+1} + F_{m-1}$. Thus, $|d_m| > \delta_m$.

\item If $D=D(Q)$ with $Q=(1,1,\ldots,2,\ldots,1)$ and $q_p=2$ for some $p\ge 3$, 
then $|d_p| = F_{p-1} + 2F_p = F_{p+2}$ and $|d_{p+1}| = F_p + F_{p+2}$, whereas $\delta_p = F_{p+1} + F_{p-1}$ 
and $\delta_{p+1} = F_{p+2} + F_p$. It is then clear that $|d_i| > \delta_i$ for $i\ge p$,
and $|d_m|\ge \delta_m = F_{m+1} + F_{m-1}$.
\end{description}

\m \no \emph{(ii)}~Similarly, let $\Delta=\Delta(Q)$ defined by $Q=(1,3,1,\ldots,1)$, and let $D$ 
be a sequence satisfying the assumption. Then $|d_m|\ge \delta_m = F_{m+2}$ $(m\ge 3)$.
\begin{description}
\item If $D = D(Q)$ with $Q = (3,1,\ldots,1)$, then $|d_2| = 4$, $|d_3 | = 7$, whereas $\delta_2 = 4$ 
and $\delta_3 = 5$. Clearly, $|d_i| > \delta_i$ for $i=3$, and $|d_m| > \delta_m > F_{m+2}$.

\item If $D = D(Q)$ with $Q=(1,1,\ldots,3,\ldots,1)$ and $q_p=3$ 
for $p=3$, then $|d_p| = F_{p-1} + 3F_p = F_{p+3} + F_{p-2}$, and 
$|d_{p+1}| = F_{p+3} + F_p + F_{p-2}$, whereas $\delta_p = F_{p+2} + F_{p-3}$  and $\delta_{p+1} = F_{p+3} + F_{p-2}$.
Therefore, $|d_i|\ge \delta_i$ for $i\ge p$, and $|d_m|\ge \delta_m = F_{m+2} + F_{m-3} > F_{m+2}$.
\end{description}

\m \no \emph{(iii)}~The proof is similar to the previous one with $Q = (1,2,1,\ldots,1,2,1)$. For such a choice of $Q$, 
$|d_m|\ge \delta_m = F_{m+2} + 2F_{m-3}$, and the result follows. 
\end{proof}

\begin{lem} \label{dt}
For every $m\ge 3$, let $Q=(1,1,\ldots,1,2,1,\ldots,1)$, and let p be the index such that $q_p = 2$ $(q_j = 1$ for 
$j\ne p$, $1\le j\le m)$. Then, for $p=1,2,\ldots,m$, 
$|d_m|$ explicitly expresses as 
$$|d_m| \,=\, F_{m-p+1}\, F_{p+2} \;+\; F_{m-p}\, F_p.$$
\end{lem}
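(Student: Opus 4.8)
The plan is to prove the closed form
$|d_m| = F_{m-p+1}F_{p+2} + F_{m-p}F_p$
by induction on $m$, splitting the recursion~(\ref{eq:d}) at the index $p$ where the single ``$2$'' sits. First I would establish the values of the sequence $(|d_i|)$ up to and including index $p$: since $q_i = 1$ for $i = 1, \ldots, p-1$, the basis cases of Lemma~\ref{lem1}(i) (applied with all quotients equal to $1$) give $|d_i| = F_{i+1}$ for $i = 0, 1, \ldots, p-1$. Then the step at index $p$, where $q_p = 2$, yields $|d_p| = |d_{p-2}| + 2|d_{p-1}| = F_{p-1} + 2F_p = F_{p+2}$, exactly the ``$F_{p+2}$'' bump already observed in the proof of Lemma~\ref{impl}(i). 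So after the anomalous step we have the pair $(|d_{p-1}|, |d_p|) = (F_p, F_{p+2})$.

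Next I would run the plain Fibonacci-type recursion $|d_{i+1}| = |d_{i-1}| + |d_i|$ for $i = p, p+1, \ldots, m-1$ (all remaining quotients being $1$), started from that pair. The key observation is that any sequence satisfying $x_{i+1} = x_{i-1} + x_i$ is linear in its two initial values, so
$|d_{p+j}| = F_{j+1}|d_p| + F_j|d_{p-1}|$ for $j \ge 0$,
where the coefficients $F_{j+1}, F_j$ are read off from the two ``unit'' solutions of the recursion (the one starting $(0,1)$ and the one starting $(1,0)$, shifted appropriately). This is a one-line induction on $j$ using $F_{j+1} = F_j + F_{j-1}$. Setting $j = m - p$ and substituting $|d_p| = F_{p+2}$, $|d_{p-1}| = F_p$ gives
$|d_m| = F_{m-p+1}F_{p+2} + F_{m-p}F_p$,
which is the claim. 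The boundary cases $p = 1$ and $p = m$ should be checked directly against the stated formula (for $p = m$ it reads $|d_m| = F_1 F_{m+2} + F_0 F_m = F_{m+2}$, consistent with the ``$Q = (1,\ldots,1,2)$'' computation in Lemma~\ref{impl}(i); for $p = 1$ it reads $|d_m| = F_m F_3 + F_{m-1}F_1 = 2F_m + F_{m-1} = F_{m+1} + F_{m-1}$, matching $\delta_m$ there).

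The only mild subtlety — and the part I would be most careful about — is bookkeeping at the two ends: making sure the index shift in ``$|d_{p+j}| = F_{j+1}|d_p| + F_j|d_{p-1}|$'' is the correct one (it hinges on the convention $d_{-1} = 0$, $d_0 = 1$ and on which Fibonacci indices the unit solutions carry), and confirming that the formula degenerates correctly when $p = 1$ (so that ``$d_{p-1}$'' is $d_0 = 1$) and when $p = m$ (so that the factor $F_{m-p} = F_0 = 0$ kills the second term). No step here is genuinely hard; the content is entirely the linearity of a constant-coefficient recursion in its initial data, exactly as already exploited implicitly in the proof of Lemma~\ref{impl}.
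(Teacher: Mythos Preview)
Your argument is correct and matches the paper's approach, which is simply ``the proof proceeds along the same arguments as for Lemma~\ref{impl}'' --- i.e., track the sequence $(|d_i|)$ through the recursion, splitting at the index $p$. Your formulation via linearity of the Fibonacci recursion in its initial data is a clean way to package that same computation.

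One minor slip worth fixing: in your $p=1$ sanity check you write $2F_m + F_{m-1} = F_{m+1} + F_{m-1}$, but in fact $2F_m + F_{m-1} = F_m + F_{m+1} = F_{m+2}$. This is actually consistent with the $Q=(2,1,\ldots,1)$ case in Lemma~\ref{impl}(i), which gives $|d_m| = F_{m+2}$; the value $F_{m+1}+F_{m-1}$ you quoted is $\delta_m$ for $Q=(1,2,1,\ldots,1)$, i.e.\ $p=2$, not $p=1$. This does not affect your main argument.
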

\begin{proof} The proof proceeds along the same arguments as for Lemma~\ref{impl}. \end{proof}

\begin{prop} \label{prop2}
For every integer $k\ge 9$ $(m\ge 3)$, if $t(k,c)=m$, then
\begin{description}
\item either $c\in J_m(k)$,

\item or $k/c = [1,\ldots,1,2,1,\ldots,1,x]$. That is, there exists $i\in \{1,\ldots,m\}$ such that $q_i = 2$ 
and for any $j\ne i$, $j\le m$ and $q_j = 1$.
\end{description}
In that case, the inequality $F_{m+1} + F_{m-1} < \sqrt{k}$ holds.
\end{prop}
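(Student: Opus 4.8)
The plan is to characterize exactly those $c$ with $t(k,c)=m$ that fall outside $J_m(k)$. The starting observation is that $t(k,c)=m$ forces $|d_m|\ge F_{m+1}$ (Lemma~\ref{lem1}(i)) and $|d_m|<\sqrt{k}<F_{m+2}$ by~(\ref{eq:t}) and the definition of $m(k)$. So the quantity $|d_m|$ is squeezed between $F_{m+1}$ and $F_{m+2}$. The quotient sequence $Q=(q_1,\ldots,q_m)$ determines $|d_m|$ monotonically, so I would argue by cases on the multiset of values appearing in $Q$.

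First I would dispose of the ``large quotient'' cases using Lemma~\ref{impl}: if any $q_i\ge 3$, then $|d_m|\ge F_{m+2}>\sqrt{k}$ by part~(ii), contradicting $n_m<\sqrt{k}$ (since $|d_m|<\sqrt{k}$ is forced by~(\ref{eq:t})); similarly if two distinct indices carry a $2$, part~(iii) gives $|d_m|\ge F_{m+2}+2F_{m-3}>\sqrt{k}$, again a contradiction. Hence at most one $q_i$ equals $2$ and all others equal $1$. If every $q_i=1$, then $k/c=[1,1,\ldots,1]$ (with $m$ ones), which by Proposition~\ref{prop1}(i)--type reasoning means $c\in I_m(k)$, and since $\gcd(k,c)=1$ we get $c\in J_m(k)$: this is the first alternative. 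Otherwise exactly one $q_i=2$, which is precisely the second alternative, $k/c=[1,\ldots,1,2,1,\ldots,1,x]$.

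It remains to prove the final inequality $F_{m+1}+F_{m-1}<\sqrt{k}$ in the second case. Here I would invoke Lemma~\ref{impl}(i): the presence of a $q_i=2$ gives $|d_m|\ge F_{m+1}+F_{m-1}$. Combined with $|d_m|<\sqrt{k}$ (which holds since $t(k,c)=m$ means the loop actually ran $m$ times and the output pair satisfies $0<n_m,|d_m|<\sqrt{k}$), we obtain $F_{m+1}+F_{m-1}\le|d_m|<\sqrt{k}$ immediately. One should double-check the edge case where the single $2$ sits at position $m$ or position $1$, using Lemma~\ref{dt} to confirm $|d_m|=F_{m-p+1}F_{p+2}+F_{m-p}F_p$ is never smaller than $F_{m+1}+F_{m-1}$ across $p=1,\ldots,m$; the minimum over $p$ is attained at an endpoint and equals exactly $F_{m+1}+F_{m-1}$, so the bound is tight and correct.

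The main obstacle I anticipate is not any single computation but making the ``$Q$ determines $|d_m|$ monotonically, so it suffices to check the extremal $Q$'' step fully rigorous: one needs that replacing any $q_i$ by a larger value, or replacing a pattern by one of the patterns analyzed in Lemma~\ref{impl}, only increases $|d_m|$, and that the cases enumerated (all ones; exactly one two; a three somewhere; two twos) are genuinely exhaustive and mutually exclusive with respect to the bound $|d_m|<F_{m+2}$. Once the case analysis is pinned down, each individual bound is a direct citation of Lemma~\ref{impl} or Lemma~\ref{dt}, so the remaining work is bookkeeping rather than new ideas.
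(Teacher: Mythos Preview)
Your proposal is correct and follows essentially the same route as the paper, which simply says the result ``follows from the inequalities~(\ref{eq:t}) and Lemma~\ref{impl}''; you have merely unpacked that one-line proof into the explicit case analysis (no $q_i\ge 3$, no two $2$'s, hence all ones or exactly one $2$) and then read off the inequality $F_{m+1}+F_{m-1}<\sqrt{k}$ from Lemma~\ref{impl}(i) together with $|d_m|<\sqrt{k}$. The only slightly loose step is your appeal to ``Proposition~\ref{prop1}(i)--type reasoning'' for the all-ones case, since that proposition states the forward implication $c\in I_m(k)\Rightarrow k/c=[1,\ldots,1,x]$ whereas you need its (equally standard) converse; this is a routine continued-fraction fact and not a real gap.
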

\begin{proof} The proof follows from the inequalities~(\ref{eq:t}) and Lemma~\ref{impl}. \end{proof}

\subsection{Application of Proposition~\ref{prop2}} \label{appli}
The two following cases are examplified in Table 1. Assume $J_m(k) =\emptyset$.
\begin{description}
\item[Case 1:] $N(k)\le m(k) - 1$ holds, for example when $k=2^6$, $2^8$ or $2^{12}$, etc. (the inequality 
$F_{m+1} + F_{m-1} > \sqrt{k}$ holds).

\item[Case 2:] $N(k) = m(k)$. The procedure that determines all possible integers $c$ in the worst case is described 
in Section~\ref{wc}.
\end{description}

\section{Worst-Case Occurrences} \label{wc}
Assuming that $J_m(k) = \emptyset$, we search for the positive integers $c$ such that $t(k,c) = m(k)$.

\m \no {\bf Step 1.} Consider each value of $p$ $(p=1,2,\ldots,m)$, and select the $p$'s that meet 
the condition $|d_m| < \sqrt{k}$ (Lemma~\ref{impl} provides all values of $|d_m|$ for each such $m$). 
If $t(k,c)$ is still equal to $m$, then there exists a pair $(n_{m-1},n_m)$ satisfying the Diophantine equation
\begin{equation} \label{eq:dio}
n_{m-1}\, |d_m| \;+\; n_m\, |d_{m-1}| \;=\; k, 
\end{equation}
under the two conditions
\begin{equation} \label{eq:gcd}
\gcd(n_m,n_{m-1}) = 1\ \quad \mbox{with} \qquad n_m < \sqrt{k} < n_{m-1}
\end{equation}
and
\begin{equation} \label{eq:nt}
0 \;<\; n_m,\,|d_m| \;<\; \sqrt{k}. 
\end{equation}
The system of equations~(\ref{eq:dio})-(\ref{eq:gcd})-(\ref{eq:nt}) is denoted by $(\Sigma_Q)$, 
since it depends on $|d_m|$ and $|d_{m-1}|$, and thus on $Q$. Further, Eq.~(\ref{eq:dio}) is the expression 
of~(\ref{eq:eea}) when $i=m-1$, Eq.~(\ref{eq:nt}) expresses the exit test condition of {\em JWA} 
and Eq.~(\ref{eq:gcd}) ensures that $\gcd(k,c) = \gcd(n_m,n_{m-1}) = 1$.

\m \no {\bf Step 2.} Eq.~(\ref{eq:dio}) is solved modulo $|d_{m-1}|$. For $0\le a < |d_{m-1}|$,
$$n_{m-1}\;\equiv \;k/|d_m| \kern-.3cm \pmod{|d_{m-1}|}\;\equiv \;a \kern-.3cm \pmod{|d_{m-1}|},$$
and, from the inequality 
$$\sqrt{k} < n_{m-1} < k/|d_m|,$$
we have $n_{m-1} = a + r \,|d_{m-1}|$, where $r$ is a positive integer such that
$$(\sqrt{k} - a)/|d_{m-1}| \;<\; r \;<\; (k/|d_m| - a)/|d_{m-1}|.$$
Therefore, there exists only a finite number of solutions for $n_{m-1}$. Each solution of Eq.~(\ref{eq:dio}) (if any) 
fixes a positive integer $c\equiv n_{m-1}/|d_{m-1}|\pmod{k}$ such that $t(k,c)=m$, and $N(k)=m$.

\begin{ex}
Let $k=15,849$ and $m=10$. By Lemma~\ref{dt} (with $m=10$ and $p=2$), Eq.~(\ref{eq:dio}) yields 
$123 n_{m-1} + 76 n_m = 15,849$. Solving modulo 76 gives $n_{m-1}=127$ and $n_m=3$. 
The pair $(n_{m-1},n_m)$ corresponds to the value $c=11,468$, and $t(k,c) = N(k) = m(k) = 10$, while $J_m = \emptyset$.
\end{ex}

\subsection{Applications}
The following algorithm summarizes the results by computing the values of $N(k)$.

\newpage
\begin{algorithm}
\begin{tabbing}
\quad \=\quad \=\quad \kill
\> \\
\> $t := m$ ; \\
{\bf repeat} \\
\> {\bf if}\ $\exists c\in J_t | n_{t-1}>\sqrt{k}$\ 
{\bf then}\ $N := t$ \\
\> {\bf else}\ $\qquad \mbox{/*~} J_t=\emptyset$ or \\
\> no $c\in J_t$ satisfies $n_{t-1}>\sqrt{k}$~*/ \\
\> \> {\bf if}\ $(F_{t+1} + F_{t-1} < \sqrt{k})$\\
\> \> {\bf and}\ $(\exists c$ solution of $(\Sigma_Q))$\\
\> \> {\bf then}\ $N := t$\ {\bf else} $t := t - 1$ ;\\
{\bf until}\ $N$ is found
\end{tabbing}
\end{algorithm}

\begin{rem}\hfill

\begin{enumerate}
\item The algorithm terminates, since $N(k)\ge 1$ for every $k\ge 3$. 
Indeed, the first condition in the repeat loop always holds when $t=1$, since $k-1\in J_1(k)$ $(k\ge 3)$.

\item In the algorithm, $(\Sigma_Q)$ corresponds to the system~(\ref{eq:dio})-(\ref{eq:gcd})-(\ref{eq:nt}), 
where $t$ substitutes for $m$.
\end{enumerate}
\end{rem}
The case when $k > 1$ is an even power of 2 is of special importance, since it is related to the practical 
implementation of {\em JWA}~\cite{web}. Table 1 gives some of the values of $N(k)$, for $k=2^{2s}$  $(2\le s\le 16)$.

\begin{table}[h!]
\begin{center}
{\setlength{\tabcolsep}{5pt}
\begin{tabular}{|c|*{15}{l|}} 
\hline
$k$ & $2^4$ & $2^6$ & $2^8$ & $2^{10}$ & $2^{12}$ & $2^{14}$ & $2^{16}$ & $2^{18}$ %
& $2^{20}$ & $2^{22}$ & $2^{24}$ & $2^{26}$ & $2^{28}$ & $2^{30}$ & $2^{32}$ \\
\hline
$m(k)$ & 3 & 5 & 6 & 7 & 9 & 10 & 12 & 13 & 15 & 16 & 17 & 19 & 20 & 22 & 23 \\
$N(k)$ & 2 & 4 & 5 & 7 & 8 & 10 & 12 & 12 & 14 & 15 & 16 & 19 & 20 & 21 & 22 \\
\hline
\end{tabular}}
\caption{Values of $m(k)$ and $N(k)$ for $k = 2^{2s}$  $(2\le s\le 16)$.}
\end{center}
\end{table} 

\section{Concluding Remarks}
First we must point out that the condition $\gcd(k,c) = 1$ is a very strong requirement: it eliminates many integers 
within $I_m(k)$ and many solutions of $(\Sigma_Q)$. This can be seen e.g. when $k=2^{24}$. 
Then $m(k)=17$, and the choice of $Q=(1,2,1,\ldots,1)$, (i.e., $|d_m|=3,571$, $|d_{m-1}|=2,207$) yields 
$n_{m-1}=4,404$ and $n_m=476$, which leads to the solution $c=12,140,108$. We still have $t(k,c)=m(k)=17$ 
but unfortunately $\gcd(k,c)\ne 1$, and $N(k) = 16 =m(k) - 1$.

Checking whether $J_{m-2}(k)$ is empty is easy. It gives a straightforward answer to the question whether 
$$m(k)-2\; \le \;N(k)\;\le \;m(k)$$ 
or not.

\bigskip The following problems remain open.
\begin{itemize}
\item The example in Table~1 shows that, for $k=2^{2s}$ $(2\le s\le 16)$, the values of $N(k)$ are either 
$N(k) = m(k)$ or $N(k) = m(k)-1$. Does the inequality 
$$m(k)-1\;\le \;N(k)$$
always hold for $k = 2^{2s}$ $(n\ge 2)$? 

\item $N(k)$ is never less than $m(k)-2$. Are the inequalities 
$$m(k) - 2\le N(k)\le m(k)$$ 
true for every positive integer $k\ge 9$?

\item Find the greatest lower bound of $N(k)$ as a function of $m(k)$.
\end{itemize}

\vskip 1cm
\no {\small Christian {\sc Lavault}$^*$ and Sidi Mohamed {\sc Sedjelmaci}$^\dagger$ 
\newline LIPN, CNRS UPRES-A 7030 -- \texttt{http://lipn.univ-paris13.fr}
\newline Université Paris 13, 99 av. J.-B. Clément F-93430 Villetaneuse. 
\newline $^*$ \emph{E-mail:} \texttt{Christian.Lavault@lipn.univ-paris13.fr}, 
\newline \emph{URL:} \texttt{http://lipn.univ-paris13.fr/\textasciitilde lavault}
\newline $^\dagger$ \emph{E-mail:} \texttt{sms@lipn.univ-paris13.fr}
}
\end{document}